\documentclass[journal]{IEEEtran}
\hyphenation{op-tical net-works semi-conduc-tor}

\usepackage{setspace}
\usepackage{fancyhdr}
\usepackage{graphicx}
\usepackage{subfigure}
\usepackage{color}
\usepackage{placeins}
\usepackage{float}
\usepackage{tabularx,colortbl}
\usepackage{amsmath}
\usepackage{amsfonts}
\usepackage{amssymb}
\usepackage{amsthm}
\usepackage[linesnumbered,boxed,ruled,commentsnumbered]{algorithm2e}
\usepackage{bm}
\usepackage{epstopdf}
\usepackage{cite}
\usepackage{diagbox}

\begin{document}

\title{Multi-cell Edge Coverage Enhancement Using Mobile UAV-Relay}
\author{
\IEEEauthorblockN{Yukuan~Ji,~\IEEEmembership{Student~Member,~IEEE}, Zhaohui~Yang,~\IEEEmembership{Member,~IEEE}, Hong~Shen,~\IEEEmembership{Member,~IEEE}, Wei~Xu,~\IEEEmembership{Senior~Member,~IEEE}, Kezhi~Wang,~\IEEEmembership{Member,~IEEE}, and~Xiaodai~Dong,~\IEEEmembership{Senior~Member,~IEEE}}\\
\thanks{
	Manuscript received October 11, 2019; revised March 6, 2020; accepted March 28, 2020. This work was supported in part by the Natural Science Foundation of Jiangsu Province for Distinguished Young Scholars under Grant BK20190012, the NSFC under grants 61871109, 61871108, and 61941115, the Fundamental Research Funds for the Central Universities under Grant 2242020K40123, and the Royal Academy of Engineering under the Distinguished Visiting Fellowship scheme. This work of X. Dong was supported in part by the NSERC of Canada under Grant 522620. \emph{(Corresponding Authors: Wei Xu, Hong Shen)}
	
	Y. Ji and H. Shen are with the National Mobile Communications Research Laboratory, Southeast University, Nanjing 210096, China (e-mail: ykji@seu.edu.cn; shhseu@seu.edu.cn).
	
	Z. Yang is with the Center for Telecommunications Research, King's College London, London WC2B 4BG, U.K. (e-mail: yang.zhaohui@kcl.ac.uk).
	
	W. Xu is with the National Mobile Communications Research Laboratory, Southeast University, Nanjing 210096, China, and also with Purple Mountain Laboratories, Nanjing 211111, China (e-mail: wxu@seu.edu.cn).
	
	K. Wang is with the Department of Computer and Information Sciences, Northumbria University, Newcastle upon Tyne NEI 8ST, U.K. (e-mail: kezhi.wang@northumbria.ac.uk).
	
	X. Dong is with the Department of Electrical and Computer Engineering, University of Victoria, Victoria, BC V8W 3P6, Canada (e-mail:xdong@ece.uvic.ca).
	
	Copyright (c) 2020 IEEE. Personal use of this material is permitted. However, permission to use this material for any other purposes must be obtained from the IEEE by sending a request to pubs-permissions@ieee.org.
}
}

\maketitle

\newtheorem{mylemma}{Lemma}
\newtheorem{mytheorem}{Theorem}
\newtheorem{mypro}{Proposition}
\newtheorem{mydis}{Discussion}
\begin{abstract}
Unmanned aerial vehicle (UAV)-assisted communication is a promising technology in future wireless communication networks. UAVs can not only help offload data traffic from ground base stations (GBSs), but also improve the quality of service of cell-edge users (CEUs). In this paper, we consider the enhancement of cell-edge communications through a mobile relay, i.e., UAV, in multi-cell networks. During each transmission period, GBSs first send data to the UAV, and then the UAV forwards its received data to CEUs according to a certain association strategy. In order to maximize the sum rate of all CEUs, we jointly optimize the UAV mobility management, including trajectory,  velocity, and acceleration, and association strategy of CEUs to the UAV, subject to minimum rate requirements of CEUs, mobility constraints of the UAV and causal buffer constraints in practice. To address the mixed-integer nonconvex problem, we transform it into two convex subproblems by applying tight bounds and relaxations. An iterative algorithm was proposed to solve the two subproblems in an alternating manner. Numerical results show that the proposed algorithm achieves higher rates of CEUs as compared with existing benchmark schemes.
\end{abstract}

\begin{IEEEkeywords}
Unmanned aerial vehicle (UAV), mobility management, trajectory optimization, user association, mobile relay.	
\end{IEEEkeywords}

\IEEEpeerreviewmaketitle
\section{Introduction}
Owing to high mobility and agility, unmanned aerial vehicle (UAV)-assisted communications have been acknowledged promising in enhancing future wireless networks \cite{Valavanis2015Handbook,Zeng2016Wireless,Chen2018Liquid,Zhao2018Integrating,Mozaffari2019Beyond}. Numerous UAV-assisted applications have emerged during the past decade, such as cargo delivery, surveillance and monitoring, with UAVs acting as different types of communication platforms including aerial base stations (BSs), aerial relays, or aerial terminals \cite{Li2019UAV,Hourani2014Optimal,Lyu2017Placement,Wang2018Power}. As aerial BSs, UAVs can provide reliable communication links for ground devices. As an aerial terminal, UAV has the degree of freedom of completing special tasks \cite{Duan2019Resource}.
In particular, UAV, acting as an aerial relay, can enlarge communication coverage and improve communication quality. Actually, the relaying technology has been widely investigated in terrestrial communications. However, most of these relays are fixed with limited mobility. Different from the static relays, UAV-enabled mobile relays offer new opportunities for performance improvement by tuning the location of UAV-relay dynamically
to best suit various specific environments, especially for latency-tolerant applications \cite{Zhan2011Wireless} and scenarios with harsh conditions \cite{Zeng2016Throughput}. Because the receivers are generally location-dispersive with mobility, the best relaying position for the receivers can vary from one to another and also from time to time. For UAV-relays with the ability of moving around, the UAV can dynamically fly near to the best position for the communication node pair \cite{Zeng2016Wireless}. Moreover,
hovering UAVs at a high altitude provide a high probability of establishing line-of-sight (LoS) links between the UAVs and ground devices \cite{Zhang2016Probabilistic}, which further leads to improved data rate and reduced latency. 

Furthermore, the distinctive characteristics of UAV make it an important technology in Internet of Things (IoT) \cite{Motlagh2016Low}. Therefore, UAV-assisted communications for 5G IoT have recently been of wide research interest. In some applications, such as agricultural surveillance, IoT devices may be deployed remotely in rural areas far from base stations. It is expensive and inconvenient to build terrestrial communication facilities to achieve the information exchange and collection for these IoT devices \cite{Motlagh2016Low,Feng2019UAV}. UAV-enabled relays help IoT devices communicate with the base station whenever necessary, which in fact expands the effective coverage of base stations. Besides, IoT devices are usually energy-limited and thus they lack the ability to communicate over a wide range.
By leveraging the mobility of UAV, it is possible to fly close to the IoT devices to communicate with them, including collecting data from devices and transmitting signals to them. In this way, the IoT devices can communicate with access points with less energy 
\cite{Mozaffari2016Unmanned,Feng2019UAV}. At the same time, the UAV can also transmit energy to energy-constrained IoT devices through radio frequency signals, which can further extend their working life \cite{Su2020UAV}. In addition, another typical application is post-disaster rescue. When cellular infrastructure is destroyed and the communication is disrupted in a sudden disaster, UAVs can be dispatched to establish temporary communication and send rescue information for IoT devices \cite{Liu2019Resource}. The IoT devices can be all kinds of human portable machine type devices, guiding humans to evacuate, avoid danger, and get rescued as soon as possible based on the rescue information.

These potential benefits of UAV relay however comes with the new challenge of three-dimensional (3D) deployment and trajectory design of UAV specifically for the communication pairs to be served \cite{He2018Joint,Pan2019Joint}. This is location based optimization in communication which is of current interest for UAV-relays \cite{Zhang2018Trajectory,Zhang2018Joint} and is most related to our current work. In particular, in order to achieve efficient and high-capacity communication, the optimal relay trajectory design of UAV requires a balance between the source-relay and relay-destination throughput. Besides, the trajectory design can greatly affect the energy efficiency of UAVs, which is a key metric especially for battery-limited UAVs \cite{Mozaffari2016Efficient,Yang2018Joint}. The authors of \cite{Yang2018Joint} proposed an iterative algorithm to minimize the sum uplink power by jointly optimizing the UAV's flight altitude, antenna beamwidth, location, transmission bandwidth and power.
In \cite{Mozaffari2017Mobile}, the deployment of multiple UAVs was optimized for collecting data from geometrically distributed IoT devices. Further considering a propulsion power consumption model, the authors of \cite{Zeng2017Energy} optimized the trajectory of UAVs aiming at maximizing energy efficiency, or equivalently lengthening the working life-time of UAVs. 

Besides the above new challenges compared to traditional relays in terms of coverage, the deployment of UAV also brings other new opportunities and technique challenges, including channel modeling \cite{Cai2017Low}, energy efficiency \cite{LinStriking}, and interference management \cite{Fouda2019Interference}. In UAV-aided communication networks, there exist both UAV-to-ground and UAV-to-UAV channels, which are quite different from well-studied traditional ground communication channels. Though the UAV-to-ground channels are usually expected as LoS links, they may also be blocked by obstacles making the reliability of communication challenging. As for UAV-to-UAV channels, they are dominated by LoS links suffering from possibly high Doppler frequency shifts. Therefore, it is necessary to measure and model these two kinds of channels more systematically \cite{Zeng2016Wireless}. Besides, UAVs suffer from limitations of size, weight, and power (SWaP), which makes the deployment and operation of energy-efficient UAVs essential for smart energy use. In the UAV-aided communication networks, there is a lack of fixed backhaul links and centralized control due to UAV's high mobility, which makes interference management more challenging than that in terrestrial communication networks. Therefore, interference management technologies especially designed for UAV communications are necessary \cite{Mei2018Uplink}.

Recently, researches have shown the potential of UAVs in expanding communication coverage or improving quality of service (QoS) of cell-edge users (CEUs). The authors of \cite{Cheng2018UAV} maximized the sum rate of a multiuser network by jointly optimizing UAV trajectory and offloading schedule among multiple cells with the UAV acting as a mobile BS. In the hybrid cellular network, the UAV-enabled BSs and ground base stations (GBSs) jointly served the ground users. In \cite{Zeng2016Throughput}, the authors investigated a  point-to-point communication system where a UAV relayed information from source to destination. Also for a point-to-point system, the authors of \cite{Pan2019Joint} proposed an algorithm to minimize the decoding error probability by jointly optimizing the time length allocation and UAV locations. Considering a layered network where a swarm of UAV was deployed to provide high QoS for IoT devices and enlarge the coverage area, the authors of \cite{Zhang2019IoT}  optimized the number of UAVs and proposed a low latency routing algorithm. In \cite{Yuan2019Joint}, a location-based beamforming scheme was proposed to enhance the security in a UAV-enabled relaying system. However, practical causal cache constraint of the UAV was not considered. In fact, the relayed information has to be buffered at the UAV before being forwarded to destinations in practice \cite{Zhao2019Caching}, which results in the causal constraint.

In this paper, we consider cell-edge performance enhancement in a multi-cell network in IoT applications
by using a UAV relay, where the UAV equipping with a cache acts as the decoding forward mobile relay to forward information from adjacent GBSs to CEUs. Main contributions of this paper are summarized as follows:
\begin{itemize}
\item We consider the scenario where a UAV-enabled mobile relay helps forward data to CEUs distributed in the joint edge coverage of multiple cells. To maximize the sum rate of all CEUs, we formulate an optimization problem by jointly optimizing the UAV mobility management, including trajectory, velocity, and acceleration, and UAV-CEU association strategy, subject to minimum rate requirements of CEUs, mobility constraints of the UAV and causal buffer constraints in practice. 
\item The cache induces an information causality constraint in practice which has rarely been considered in existing works. The UAV can only forward the data which has been successfully received during the previous time slots. This causal constraint has a very complicated form and we successfully transform it into a convex constraint by resorting to tight bounds and relaxations.
\item The original problem is a mixed-integer nonconvex problem, whose optimal solution is generally hard to be obtained. We devise an iterative algorithm to solve the original problem in an alternating manner. For the UAV mobility management optimization subproblem, we transform it into a convex problem and solve it by the well-established interior-point method. Then, we use the dual decomposition method to solve the UAV-CEU association optimization subproblem. 
\end{itemize}

The rest of this paper is organized as follows. System model and problem formulation are presented in Section \ref{sec:system_model}. In Section \ref{sec:algorithm}, we decompose the original problem into two subproblems and solve them separately. An iterative mobility management and user association (IMMUA) algorithm is proposed. The convergence and complexity are also discussed in this section. In Section \ref{sec:results}, numerical results are presented to show the effectiveness of our proposed algorithm. Finally, concluding remarks are drawn in Section \ref{sec:conclusion}.

\vspace{0.1cm}
\section{System Model And Problem Formulation}\label{sec:system_model}
\subsection{System Model}
\begin{figure}[t!]
	\setlength{\abovecaptionskip}{-1pt}
	\centering
	\includegraphics[width = 3.2in, height = 2.5in]{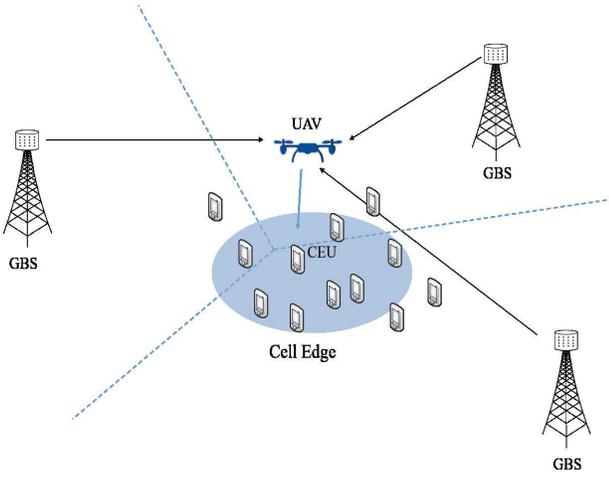}
	\caption{System model of a UAV-enabled mobile relaying network.}
	\label{Fig:system model}
\end{figure}
We consider that a mobile relay, i.e., UAV, helps serve ground CEUs distributed within the overlapped edge coverage of $ N_B $ adjacent GBSs, as depicted in Fig. \ref{Fig:system model}. In practice, it is necessary to select the appropriate type of UAV according to different application scenarios, including QoS, operating conditions, and laws \cite{Mozaffari2019A}. In general, UAVs can be roughly classified into fixed-wing and rotary-wing UAVs. The advantage of rotary-wing UAVs is that they can hover over fixed positions and fly in arbitrary direction while the disadvantages are poor mobility and limited load. In contrast, fixed-wing UAVs support higher flight speed and heavier load but need to keep flying in the air \cite{Zeng2016Wireless}. In this paper, we consider the rotary-wing UAV as a mobile relay, allowing the UAV to hover above the optimal positions and rotate by an arbitrary angle \cite{Filippone2006Flight}. Under this consideration, the constraint on the rotation of UAV can be released. The system model of UAV-assisted communication is considered in IoT applications. The CEUs can be all kinds of IoT devices, such as sensors or actuators in smart agriculture and robots in machine-to-machine (M2M) scenarios. The system model can also be applied to other scenarios via slight modifications, such as a post-disaster communication scenario.

In this paper, we focus on cell-edge users rather than cell-center users to emphasize the great potential of UAVs in improving QoS of edge users. We assume that each GBS is equipped with $ L $ antennas, and the UAV as well as each user respectively has a single antenna. Denote $ \mathcal{K}=\{1,\cdots,K\} $ as the set of CEUs. Each CEU is served via the UAV relay in order to enhance the cell-edge performance under a minimum rate constraint. During each transmission period of $ T $ seconds, the adjacent GBSs first send data to UAV, and then the UAV forwards the data to CEUs. The period length can be selected according to the application scenario and mission type of the UAV. The UAV receives and forwards data in frequency division duplexing (FDD) mode. 

3D Cartesian coordinate system is considered, in which all GBSs as well as CEUs have zero altitude and the UAV flies at a fixed altitude $ H $ in meters. The GBSs and CEUs have fixed horizontal locations denoted respectively by $ \bm{b}_m=(x_m,y_m) $ for the $ m $th GBS and $ \bm{e}_k=(x_k,y_k) $ for CEU $ k\in\mathcal{K} $. The locations of GBSs and CEUs are assumed known to the UAV. Without loss of generality, each transmission period $ T $ is splitted into $ N $ equal-length time slots, with $ \delta_t=\frac{T}{n} $ denoting the elementary slot length. $ N $ can be chosen sufficiently large in order to guarantee an approximately constant UAV location within each time slot, resulting in a sufficiently small value of $ \delta_t $. At each time slot $ n\in \{1,\cdots,N\} \triangleq \mathcal N $, the horizontal coordinate of the UAV is expressed as $ \bm{u}[n]=(x[n],y[n]) $. Therefore, the UAV's trajectory can be expressed approximately by the $ N $-length sequence $ \{(x[n],y[n])\}_{n=1}^{N} $. Similarly, the UAV's velocity and acceleration can be denoted as $ \{\bm{v}[n]\}_{n=1}^{N} $ and $ \{\bm{a}[n]\}_{n=1}^{N} $, respectively.

We assume that the UAV must return to the pre-specified starting point after each transmission period, which is denoted as $ \bm{u}_0=(x_0,y_0) $. For sufficiently small $ \delta_t $, the mobility constrains of the UAV, including starting point, terminal point, speed constraint, and acceleration constraint can be expressed as
\begin{gather}
	\bm{u}[1]=\bm{u}[N]=\bm{u}_0,\\
	\|\bm{u}[n+1]-\bm{u}[n]\|\leq V_{\rm{max}}\delta_t,\ n=1,\cdots,N-1,\\
    \bm{v}[n+1]=\bm{v}[n]+\bm{a}[n]\delta_t, \ n=1,\cdots,N-1, \label{eq:velocity} \\
    \bm{u}[n+1]\!=\!\bm{u}[n]\!+\!\bm{v}[n]\delta_t\!+\!\frac{1}{2}\bm{a}[n]\delta_t^2, \ n=1,\cdots,N-1, \label{eq:acceleration} \\
	\|\bm{v}[n]\|\leq V_{\rm{max}}, \ n=1,\cdots,N,\\
	\|\bm{a}[n]\|\leq a_{\rm{max}}, \ n=1,\cdots,N,
\end{gather}
where $ \|\cdot\| $ represents the Euclidean norm of a vector, $ V_{\rm{max}} $ and $ a_{\rm{max}} $ denote the maximum speed and maximum acceleration, respectively, and $ V_{\rm{max}}\delta_t $ is the maximum displacement in each time slot. \eqref{eq:velocity} and \eqref{eq:acceleration} are obtained according to kinematics formulas.

According to the above coordinate representation, the link distance between the GBS $ m $ and the UAV at the $ n $th time slot can be expressed as
\begin{equation}
	d_{mu}[n]=\sqrt{H^2+\|\bm{u}[n]-\bm{b}_m\|^2}.
\end{equation}
Similarly, the link distance between the CEU $ k $ and the UAV at the $ n $th time slot can be expressed as
\begin{equation}
	d_{uk}[n]=\sqrt{H^2+\|\bm{u}[n]-\bm{e}_k\|^2}.
\end{equation}

Considering high altitude of UAV, air-to-ground channels between the UAV and GBSs are dominated by line-of-sight (LoS) links. By applying the free-space path-loss model \cite{Goldsmith2005Wireless}, the channel power gain from GBS $ m $ to the UAV during time slot $ n $ is
\begin{equation}
h_{mu}[n]=\alpha_0d_{mu}^{-2}[n]=\frac{\alpha_0}{H^2+\|\bm{u}[n]-\bm{b}_m\|^2},
\end{equation}
where $ \alpha_0 $ denotes the reference channel power at 1~m.
Similarly, the channel power gain from the UAV to CEU $ k $ during time slot $ n $ is 
\begin{equation}
h_{uk}[n]=\alpha_0d_{uk}^{-2}[n]=\frac{\alpha_0}{H^2+\|\bm{u}[n]-\bm{e}_k\|^2}.
\end{equation}

The $ N_B $ adjacent GBSs use the maximum ratio transmission (MRT) strategy to transmit data to the UAV. The MRT precodes the transmitted signal by using the weights proportional to the corresponding channel coefficients, which can maximize the signal-to-noise ratio (SNR) of the received signal. In the $ n $th time slot, the signal received by the UAV from $ N_B $ GBSs is
\begin{equation} \label{eq:y_m}
y[n] = \sum_{m=1}^{N_B}\sqrt{h_{mu}[n]P_B}\bm{g}_m^H \bm{w}_m s + z,\ m=1,\cdots,N_B,
\end{equation}
where $ P_B $ represents the transmitting power of GBS, $ \bm{g}_m $ accounts for the small-scale channel fading from GBS $ m $ to the UAV, $ \bm{w}_m $ is the beamforming vector, $ s $ is the transmit signal with unit power, and $ z $ is the additive white Gaussian noise (AWGN) with variance $ \sigma^2 $. Assume that full channel state information is known to the GBSs.

Considering the $ N_B $ GBSs as a large GBS with $ N_BL $ antennas in total, the overall channel can be modeled as
\begin{equation}
\bm{g}=\left(\sqrt{h_{1u}}\bm{g}_1,\cdots,\sqrt{h_{mu}}\bm{g}_m,\cdots,\sqrt{h_{N_Bu}}\bm{g}_{N_B}\right)^T.
\end{equation}
By applying the MRT beamforming $ \bm{w}=\frac{\bm{g}}{\|\bm{g}\|} $, the signal received by the UAV in \eqref{eq:y_m} can be rewritten as
\begin{equation}
y[n]=\sqrt{P_B} \|\bm{g}\| s+z.
\end{equation}
In this way, the received SNR can be maximized as
\begin{equation}
\text{SNR}=\frac{P_B \|\bm{g}\|^2}{\sigma^2}=\sum_{m =1}^{N_B}\frac{P_B h_{mu}\|\bm{g}_m\|^2}{\sigma^2}.
\end{equation}

By using the Shannon formula,
the data rate of the UAV in the $ n $th time slot can be evaluated as
\begin{equation}\label{eq:rate_r}
R_{U_r}[n]=\log_2 \left(1+\sum_{m=1}^{N_B} \frac{P_B h_{mu}[n] \|\bm{g}_m\|^2}{\sigma^2} \right).
\end{equation}

Then, the UAV forwards the received data to its associated CEU in each time slot. Assume that the UAV serves at most one CEU during each time slot. Let $ \rho_{k,n}=1 $ indicate that the $ k $th CEU associates with the UAV for reception in the $ n $th time slot and otherwise $ \rho_{k,n}=0 $. As a result, the average rate of CEU $ k $ within $ T $ equals
\begin{equation}\label{eq:rate_k}
R_E[k]=\frac{1}{N}\sum_{n=1}^{N}\rho_{k,n}\log_2\left(1+\frac{P_U h_{uk}[n]}{\sigma^2}\right),
\end{equation}
where $ P_U $ is the transmitting power of the UAV. 

From the perspective of UAV, the transmission rate from the UAV to its associated CEUs in the $ n $th time slot can be obtained as
\begin{equation}\label{eq:rate_t}
R_{U_t}[n]=\sum_{k=1}^{K}\rho_{k,n}\log_2\left(1+\frac{P_U h_{uk}[n]}{\sigma^2}\right).
\end{equation}

For the UAV with a sufficiently large buffer, without loss of generality, the processing time at the UAV is set as one time slot. The data received in the $ n $th time slot can be forwarded in the next time slot. So the UAV has no data to forward in the first time slot and the GBSs should not transmit any data to the UAV in the last time slot. Therefore, for $ n=1 $ and $ n=N $, we have $ R_{U_t}[1]=R_{U_r}[N]=0 $ and $ \rho_{k,1}=0 $. Considering causality in practice and from \eqref{eq:rate_r} and \eqref{eq:rate_t}, we can express the causal buffer constraint as
\begin{equation}\label{eq:causality}
\sum_{i=2}^{n}R_{U_t}[i]\leq \sum_{i=1}^{n-1}R_{U_r}[i], \ n=2,\cdots,N.
\end{equation}
It guarantees that the UAV in each time slot $ n $ can only forward the data that has been successfully received during the previous time slots.

\subsection{Problem Formulation}
We define the UAV's trajectory $ \bm{U}\triangleq \{\bm{u}[n],n=2,\cdots,N-1\} $, the velocity $ \bm{V}\triangleq \{\bm{v}[n],n=1,\cdots,N\} $, the acceleration $ \bm{A}\triangleq \{\bm{a}[n],n=1,\cdots,N\} $, and the UAV-CEU association strategy $ \bm{P}\triangleq \{\rho_{k,n},\forall k\in \mathcal{K},n=2,\cdots,N\} $. Our objective is to maximize the sum rate of all the CEUs by jointly optimizing $ \bm{U} $, $ \bm{V} $, $ \bm{A} $, and $ \bm{P} $ in the transmission period $ T $, subject to the minimum rate requirements of CEUs, mobility constraints of the UAV and causal buffer constraints in practice. Then, we can formulate the joint optimization problem as
\begin{subequations}\label{eq:pro_U_P}
	\begin{align}
	\mathop{\max}_{\bm{U}, \bm{V},\bm{A},\bm{P}}\quad\!\!
	&\sum_{k\in \mathcal{K}}R_E[k]\\
	\textrm{s.t.}\quad\!\!
	&R_E[k]\geq R_0, \ \forall k\in \mathcal{K}, \\
	&\sum_{k\in \mathcal{K}}\rho_{k,n}\leq 1, \ n=2,\cdots,N,\label{eq:association_2}\\
	&\rho_{k,n}=\{0,1\}, \ \forall k\in \mathcal{K}, \ n=2,\cdots,N,\\
	&\sum_{i=2}^{n}R_{U_t}[i]\leq \sum_{i=1}^{n-1}R_{U_r}[i], \ n=2,\cdots,N,	\\
	&\bm{u}[1]=\bm{u}[N]=\bm{u}_0,\\
	&\|\bm{u}[n\!+\!1]\!-\!\bm{u}[n]\|\!\leq\!V_{\rm{max}} \delta_t,\ n\!=\!1,\!\cdots\!,N\!\!-\!\!1,\!\!\!\!\!\!\!\!\!\\
    &\bm{v}[n\!+\!1]\!=\!\bm{v}[n]\!+\!\bm{a}[n]\delta_t, \ n=1,\cdots,N\!\!-\!\!1, \\
    &\bm{u}[n\!\!+\!\!1]\!=\!\bm{u}[n]\!\!+\!\!\bm{v}[n]\delta_t\!\!+\!\!\frac{1}{2}\bm{a}[n]\delta_t^2, n\!=\!1,\!\cdots\!,N\!\!-\!\!1, \\
    &\|\bm{v}[n]\|\leq V_{\rm{max}}, \ n=1,\cdots,N, \\
    &\|\bm{a}[n]\|\leq a_{\rm{max}}, \ n=1,\cdots,N,
	\end{align}	
\end{subequations}
where $ R_0 $ is the minimum rate requirement of each CEU, which guarantees the QoS of CEUs. Constraints (19c) and (19d) ensure that the UAV serves at most one CEU in each time slot. Constraint (19e) is the information causality constraint in practice. Constraints (19f)-(19k) are mobility constraints of the UAV in terms of initial location, terminal location, speed constraint, and acceleration constraint.  

\section{Joint Mobility Management and Association Optimization}\label{sec:algorithm}
We observe that the optimization problem in \eqref{eq:pro_U_P} is a mixed-integer nonconvex problem which is generally NP-hard. It is extremely challenging to obtain its optimal solution efficiently. As can be seen from \eqref{eq:pro_U_P}, the UAV's trajectory $ \bm{U} $, the velocity $ \bm{V} $, and the acceleration $ \bm{A} $ are coupled with each other, which can be collectively referred to as mobility management. We thus decompose the original problem into two subproblems, i.e., UAV mobility management and UAV-CEU association optimization. In order to make each subproblem tractable, we resort to employing bounding and relaxation to tackle the nonconvex objective function and constraints. In this way, the original nonconvex problem can be transformed into two convex subproblems and a procedure of alternating optimization is then applied. 

It is worth mentioning that the most difficult constraint of the original problem is (19e), i.e., the information causality constraints, because the denominators on both sides of the inequality contain optimization variables. As far as we know, no methods in the existing literature has tried to deal with the constraint of this kind.

\subsection{UAV Mobility Management}\label{sec:trajectory optimization}
In order to solve the optimization problem \eqref{eq:pro_U_P}, we first optimize the UAV mobility strategy, including UAV trajectory $ \bm{U} $, the velocity $ \bm{V} $, and the acceleration $ \bm{A} $, with temporarily fixed UAV-CEU association strategy $ \bm{P} $. This subproblem can be expressed as
\vspace{0.1cm}
\begin{subequations}\label{eq:pro_U}
	\begin{align}
	\mathop{\max}_{\bm{U}, \bm{V}, \bm{A}}\quad\!\!
	&\sum_{k \in \mathcal{K}}R_E[k]\\
	\textrm{s.t.}\quad\!\!
	&R_E[k]\geq R_0, \ \forall k\in \mathcal{K}, \\
	&\sum_{i=2}^{n}R_{U_t}[i]\leq \sum_{i=1}^{n-1}R_{U_r}[i], \ n=2,\cdots,N,\\
	&\bm{u}[1]=\bm{u}[N]=\bm{u}_0,\\
	&\|\bm{u}[n+1]\!-\!\bm{u}[n]\|\!\leq\!V_{\rm{max}} \delta_t,\ n\!=\!1,\cdots\!,N\!-\!1,\\
	&\bm{v}[n+1]=\bm{v}[n]+\bm{a}[n]\delta_t, \ n=1,\cdots,N-1,\\
	&\bm{u}[n\!\!+\!\!1]\!=\!\bm{u}[n]\!\!+\!\!\bm{v}[n]\delta_t\!\!+\!\!\frac{1}{2}\bm{a}[n]\delta_t^2, n\!=\!1,\!\cdots\!,N\!\!-\!\!1, \\
	&\|\bm{v}[n]\|\leq V_{\rm{max}}, \ n=1,\cdots,N, \\
	&\|\bm{a}[n]\|\leq a_{\rm{max}}, \ n=1,\cdots,N.
	\end{align}
\end{subequations}

Problem \eqref{eq:pro_U} is, however, still nonconvex due to its nonconvex objective function and constraints (20b) and (20c). The remaining constraints are relatively easy to solve. In particular, (20d), (20f), and (20g) are linear functions of the variables $ \bm{u}[n] $, $ \bm{v}[n] $, and $ \bm{a}[n] $. (20e), (20h), and (20i) are convex with respect to $ \bm{u}[n] $, $ \bm{v}[n] $, and $ \bm{a}[n] $, respectively. We try to transform the nonconvex objective function and constraints into convex objective function and constraints via relaxation. As mentioned above, the major challenge is to deal with the causality constraints (20c). To obtain a tractable form of (20c), we need to determine a convex upper bound of the left hand side (LHS) of (20c) and a concave lower bound of the right hand side (RHS) of (20c). Without loss of generality, we consider the $ (l+1) $th iteration, given the UAV-CEU association strategy obtained at the $ l $th iteration as $ \bm{P}^{(l)} $. 

Firstly, $ R_{U_r}[n] $ can be bounded by
\begin{eqnarray}\label{eq:Rrlb}
&&\!\!\!\!\!\!\!\!\!\!\!\!\!\!\!R_{U_r}[n]=\log_2 \left(1+\sum_{m=1}^{N_B} \frac{P_B h_{mu}[n] \|\bm{g}_m\|^2}{\sigma^2} \right) \nonumber\\
&&\!\!\!\!\!\!\!\!\!\!\!\!\overset{(a)}\geq \frac{1}{N_B}\sum_{m=1}^{N_B}\log_2\left(1+\frac{N_B P_B\alpha_0 \|\bm{g}_m\|^2}{\sigma^2(H^2+\|\bm{u}[n]-\bm{b}_m\|^2)}\right)\nonumber\\
&&\!\!\!\!\!\!\!\!\!\!\!\!\overset{(b)}\geq \sum_{m=1}^{N_B}\!\left(-a_r ^{(l)}[n](\|\bm{u}[n]\!\!-\!\!\bm{b}_m\|^2\!\!-\!\!\|\bm{u}^{(l)}[n]\!-\!\bm{b}_m\|^2)\!+\!b_r ^{(l)}[n]\right) \nonumber\\
&&\!\!\!\!\!\!\!\!\!\!\!\!\triangleq \underline{R}_{U_r}[n],
\end{eqnarray}
where inequality (a) follows from the Jensen's Inequality, inequality (b) is due to the facts that $ f(x)=\log \left(1+\frac{1}{x}\right) $ is convex with respect to $ x $ and the first-order Taylor approximation is a global under-estimator of convex functions \cite{Boyd2004Convex}, and
\begin{equation}\label{eq:a}
	a_r ^{(l)}[n]=\frac{\frac{P_B \alpha_0 \sigma^2 \|\bm{g}_m\|^2}{[\sigma^2(H^2+\|\bm{u}^{(l)}[n]-\bm{b}_m\|^2)]^2}\log_2 e}{1+\frac{N_B P_B \alpha_0 \|\bm{g}_m\|^2}{\sigma^2(H^2+\|\bm{u}^{(l)}[n]-\bm{b}_m\|^2)}}\geq 0,
\end{equation} \vspace{0.2cm}
\begin{equation}\label{eq:b}
	b_r ^{(l)}[n]=\!\frac{1}{N_B}\log_2\left(1\!+\!\frac{N_B P_{B}\alpha_0\|\bm{g}_m\|^2}{\sigma^2(H^2+\|\bm{u}^{(l)}[n]-\bm{b}_m\|^2)}\right).
\end{equation}
\vspace{0.1cm}

Since the coefficient $ a_r ^{(l)}[n] $ is a nonnegative value, the lower bound of $ R_{U_r}[n] $, i.e., $ \underline{R}_{U_r}[n] $, in \eqref{eq:Rrlb} is concave with respect to $ \bm{u}[n] $. Thus far, we obtain a concave lower bound of the RHS of (20c).

\begin{figure}[t!]
	\setlength{\abovecaptionskip}{-5pt}
	\centering
	\includegraphics[width = 3.7in]{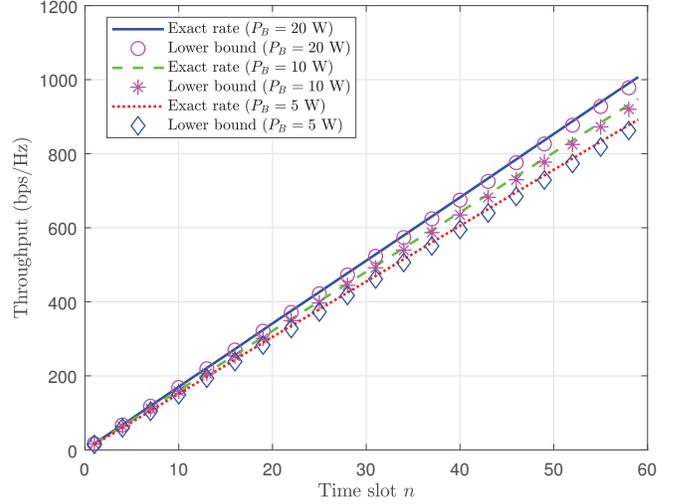}
	\caption{Exact receiving rate $ R_{U_r} $ and its lower bound $ \underline{R}_{U_r} $ with $ N=60 $ and $ \sigma^2=-114 $~dBm.}
	\label{Fig:RrBound}
\end{figure}

To exemplify the tightness of this bound, we here plot in Fig. \ref{Fig:RrBound} the exact receiving rate $ R_{U_r} $ and its lower bound $ \underline{R}_{U_r} $ for comparison. To make the figure more visible, the Y-axis represents the cumulative sum rate over time slots. From this figure, we can see that the adopted lower bound is rather tight, which can imply that we can get a near-optimal solution by exploiting this bound.

Then we need to deal with $ R_{U_t}[n] $ in (20c) and obtain an upper bound. To simplify the notations, we define 
\begin{align}\label{eq:r[n]}
r[n]&\triangleq \log_2\left(1+\frac{P_U h_{uk}[n]}{\sigma^2}\right) \nonumber\\
&=\log_2\left(1+\frac{P_U \alpha_0}{\sigma^2(H^2+\|\bm{u}[n]-\bm{e}_k\|^2)}\right),
\end{align}
\vspace{0.15cm}
by replacing $ \bm{u}[n] $ and $ \bm{e}_k $ with their horizontal coordinates with respect to $ x $ and $ y $, $ r[n] $ can be rewritten as
\begin{eqnarray}\label{eq:upper_r}
&&\!\!\!\!\!\!\!\!\!\!\!\!\!\!\!\!\!\!\!\!\!\!\!\!r[n]=\log_2\left(1+\frac{P_U \alpha_0}{\sigma^2(H^2+\|\bm{u}[n]-\bm{e}_k\|^2)}\right) \nonumber\\
&&\!\!\!\!\!\!\!\!\!\!\!\!\!=\log_2 \!\left(\!1\!+\!\frac{P_U \alpha_0}{\sigma^2\left[H^2 \!+\!(x[n] \!-\!x_k)^2 \!+\!(y[n] \!-\!y_k)^2\right]}\!\right)\nonumber\\
&&\!\!\!\!\!\!\!\!\!\!\!\!\!\leq \frac{\log_2\left(1+\frac{P_U \alpha_0}{3\sigma^2 H^2}\right)}{3} +\frac{\log_2\left(1+\frac{P_U \alpha_0}{3\sigma^2(x[n]-x_k)^2}\right)}{3}\nonumber\\
&&\!\!\!\!\!\!\!\!+\frac{\log_2\left(1+\frac{P_U \alpha_0}{3\sigma^2(y[n]-y_k)^2}\right)}{3} \nonumber\\
&&\!\!\!\!\!\!\!\!\!\!\!\!\!\triangleq \overline{r}[n],
\end{eqnarray}
\vspace{0.1cm}
where we use the Jensen's Inequality for the convex function $ \log\left(1+\frac{A}{x}\right) $ for any $ A>0 $. We can verify that the second term of $ \overline{r}[n] $ in \eqref{eq:upper_r} is convex with respect to $ x[n] $ and the third term is convex with respect to $ y[n] $ by checking their second-order derivatives. Then from the definition of $ r[n] $ in \eqref{eq:r[n]}, we have
\begin{eqnarray}\label{eq:Rtub}
&&\!\!\!\!\!\!\!\!\!\!\!\!\!\!\!\!\!\!\!\!R_{U_t}[n]=\sum_{k=1}^{K} \rho_{k,n} r[n] \nonumber\\
&&\leq \sum_{k=1}^{K} \rho_{k,n} \overline{r}[n] \nonumber\\
&&\triangleq \overline{R}_{U_t}[n].
\end{eqnarray}
\vspace{0.1cm}
Thus, we obtain an upper bound of $ R_{U_t}[n] $, i.e., $ \overline{R}_{U_t}[n] $, and it is convex with respect to $ \bm{u}[n] $. Namely, $ \overline{R}_{U_t}[n] $ is a convex upper bound of the LHS of (20c).

\begin{figure}[t!]
	\setlength{\abovecaptionskip}{-5pt}
	\centering
	\includegraphics[width = 3.7in]{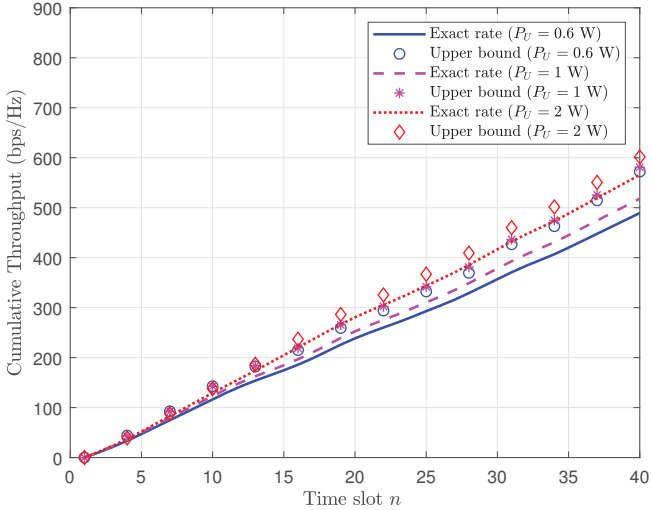}
	\caption{Comparison of $ R_{U_t} $ and its upper bound $ \overline{R}_{U_t} $.}
	\label{Fig:Rt_upper}
\end{figure}

Similarly, we compare the exact transmitting rate $ R_{U_t} $ and its upper bound $ \overline{R}_{U_t} $ in Fig. \ref{Fig:Rt_upper} to show the tightness of $ \overline{R}_{U_t} $. It is observed that the derived bound is fairly tight.

Then, as for the nonconvex objective function (20a) and constraint (20b), we need to determine a concave lower bound of $ R_E[k] $. $ r[n] $ defined in \eqref{eq:r[n]} is convex with respect to $ \|\bm{u}[n]-\bm{e}_k\|^2 $. Considering that the first-order Taylor approximation is a global under-estimator of convex functions, we have
\vspace{0.1cm}
\begin{eqnarray}\label{eq:lower_r}
&&\!\!\!\!\!\!\!\!\!\!\!\!r[n]\geq -c_k ^{(l)}[n](\|\bm{u}[n]\!-\!\bm{e}_k\|^2\!-\!\|\bm{u}^{(l)}[n]\!-\!\bm{e}_k\|^2)+d_k^{(l)}[n],\nonumber\\
&&\!\triangleq \underline{r}[n],
\end{eqnarray}
\vspace{0.1cm}
where
\begin{equation}\label{eq:c}
	c_k ^{(l)}[n]=\frac{\frac{P_U\alpha_0\sigma^2}{[\sigma^2(H^2+\|\bm{u}^{(l)}[n]-\bm{e}_k\|^2)]^2}\log_2 e}{1+\frac{P_U\alpha_0}{\sigma^2(H^2+\|\bm{u}^{(l)}[n]-\bm{e}_k\|^2)}}\geq 0,
\end{equation} \vspace{0.2cm}
\begin{equation}\label{eq:d}
	d_k ^{(l)}[n]=\log_2\left(1+\frac{P_U\alpha_0}{\sigma^2(H^2+\|\bm{u}^{(l)}[n]-\bm{e}_k\|^2)}\right).
\end{equation}
It is easy to check that $ \underline{r}[n] $ is concave with respect to $ \bm{u}[n] $ because the coefficient $ c_k ^{(l)}[n] $ is a nonnegative value. 

Further from \eqref{eq:rate_k} and \eqref{eq:lower_r}, the lower bound $ \underline{R}_E[k] $ is directly obtained as
\begin{equation}
\underline{R}_E[k]=\frac{1}{N} \sum_{n=1}^{N}\rho_{k,n} \underline{r}[n].
\end{equation}
\vspace{0.1cm}
which is concave with respect to $ \bm{u}[n] $. The comparison of $ R_E $ and its lower bound $ \underline{R}_E $ is plotted in Fig. \ref{Fig:Rk_lower}.

\begin{figure}[t!]
	\setlength{\abovecaptionskip}{-5pt}
	\centering
	\includegraphics[width = 3.6in]{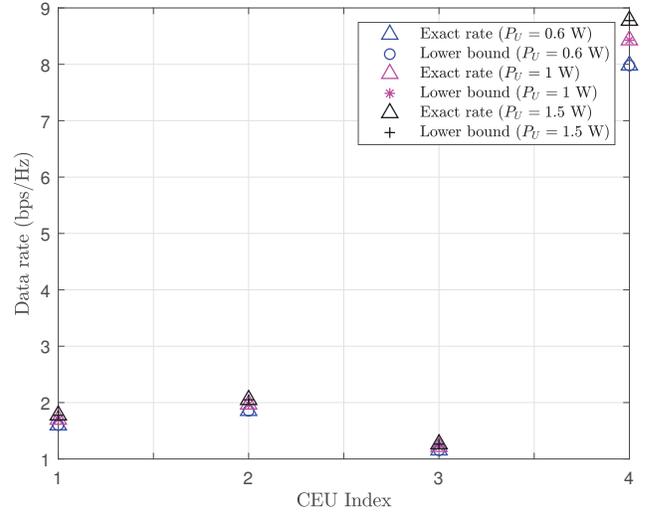}
	\caption{Comparison of $ R_E $ and its lower bound $ \underline{R}_E $.}
	\label{Fig:Rk_lower}
\end{figure}

By introducing the lower bounds $ \underline{R}_E[k] $ and $ \underline{R}_{U_r}[n] $ of $ R_E[k] $ and $ R_{U_r}[n] $ respectively and the upper bound, $ \overline{R}_{U_t}[n] $, of $ R_{U_t}[n] $, we successfully transform problem \eqref{eq:pro_U} into the following problem:
\vspace{0.1cm}
\begin{subequations}\label{eq:pro_U_new}
	\begin{align}
	\mathop{\max}_{\bm{U}, \bm{V}, \bm{A}}\quad\!\!
	&\sum_{k=1}^{K} \underline{R}_E[k]\\
	\textrm{s.t.}\quad\!\!\!\!
	&\underline{R}_E[k] \geq R_0, \ \forall k \in \mathcal{K},\\
	&\sum_{i=2}^{n}\overline{R}_{U_t}[i] \leq \sum_{i=1}^{n-1}\underline{R}_{U_r}[i], \ n=2,\cdots,N,\\
	&\bm{u}[1]=\bm{u}[N]=\bm{u}_0,\\
	&\|\bm{u}[n+1]\!-\!\bm{u}[n]\|\!\leq\!V_{\rm{max}} \delta_t, \ n\!=\!1,\cdots,N\!-\!1,\!\!\!\!\!\!\!\!\!\\
	&\bm{v}[n+1]=\bm{v}[n]+\bm{a}[n]\delta_t, \ n=1,\cdots,N-1,\\
	&\bm{u}[n\!\!+\!\!1]\!=\!\bm{u}[n]\!\!+\!\!\bm{v}[n]\delta_t\!\!+\!\!\frac{1}{2}\bm{a}[n]\delta_t^2, n\!=\!1,\!\cdots\!,N\!\!-\!\!1, \\
	&\|\bm{v}[n]\|\leq V_{\rm{max}}, \ n=1,\cdots,N, \\
	&\|\bm{a}[n]\|\leq a_{\rm{max}}, \ n=1,\cdots,N.
	\end{align}
\end{subequations}

\begin{mytheorem}
Problem \eqref{eq:pro_U_new} is a convex problem.
\end{mytheorem}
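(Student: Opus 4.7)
The plan is to verify convexity of problem \eqref{eq:pro_U_new} by going through its objective and each constraint in turn, invoking the bounds that were already established in the derivation just above the theorem. Because the optimization variables $\bm{u}[n], \bm{v}[n], \bm{a}[n]$ enter in structurally separate ways (the rates depend only on $\bm{u}[n]$, while the kinematic and norm constraints involve all three), I expect the argument to reduce to listing the relevant convexity facts; no new inequality needs to be produced.

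First I would handle the objective and constraint (30b). From \eqref{eq:lower_r} and \eqref{eq:c}, $\underline{r}[n]$ is an affine-minus-quadratic function of $\bm{u}[n]$ with nonnegative leading coefficient $c_k^{(l)}[n]$, hence concave in $\bm{u}[n]$. Since $\rho_{k,n}\in\{0,1\}$ is a fixed nonnegative constant in this subproblem, $\underline{R}_E[k]=\frac{1}{N}\sum_n \rho_{k,n}\underline{r}[n]$ is a nonnegative combination of concave functions, hence concave. The objective (30a) is the sum over $k$ of concave functions, hence concave, which is admissible for a maximization problem. Constraint (30b) requires a concave function to be lower bounded by a constant, which defines a convex superlevel set.

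Next I would handle the causality constraint (30c), which is the structurally most delicate one and the reason the bounding step was performed in the first place. On the LHS, $\overline{R}_{U_t}[i]$ as defined in \eqref{eq:Rtub} and \eqref{eq:upper_r} is a nonnegative combination (via the fixed $\rho_{k,i}\ge 0$) of the three terms of $\overline{r}[i]$; the first term is constant, and the second and third are convex in $x[i]$ and $y[i]$ respectively (as verified in the discussion after \eqref{eq:upper_r} by second-order derivatives), so $\overline{R}_{U_t}[i]$ is convex in $\bm{u}[i]$. On the RHS, $\underline{R}_{U_r}[i]$ from \eqref{eq:Rrlb} is again an affine-minus-nonnegative-quadratic function of $\bm{u}[i]$, hence concave. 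The inequality (30c) is therefore of the form (convex) $\le$ (concave), equivalently (convex) $-$ (concave) $\le 0$, which is a convex sublevel set.

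Finally I would dispatch the mobility block. Constraints (30d), (30f), (30g) are affine equalities in $\bm{u}[n], \bm{v}[n], \bm{a}[n]$ (the $\frac{1}{2}\bm{a}[n]\delta_t^2$ term is linear in $\bm{a}[n]$ since $\delta_t$ is a constant), so they define affine, hence convex, sets. Constraints (30e), (30h), (30i) are Euclidean-norm inequalities $\|\cdot\|\le\text{const}$, which are convex by the triangle inequality. Collecting: concave objective to be maximized, convex feasible set, so \eqref{eq:pro_U_new} is a convex optimization problem. I do not anticipate a genuine obstacle in this proof; the only care needed is to remember that the $\rho_{k,n}$ are treated as nonnegative constants in this subproblem, so that multiplying them into the concave/convex rate bounds preserves concavity/convexity.
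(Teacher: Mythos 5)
Your proposal is correct and follows essentially the same route as the paper's proof: verify concavity of the objective and of $\underline{R}_E[k]$, convexity of $\overline{R}_{U_t}$ and concavity of $\underline{R}_{U_r}$ for the causality constraint, and note that the mobility constraints are affine or norm-ball constraints. Your version is somewhat more explicit (e.g., noting that the fixed $\rho_{k,n}\geq 0$ preserve concavity/convexity under the weighted sums), but there is no substantive difference in approach.
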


\begin{proof}
According to the above analysis, since $ \underline{R}_E[k] $ is concave with respect to $ \bm{u}[n] $, (31a) is a convex objective function and (31b) is a convex constraint. Similarly, since $ \overline{R}_{U_t}[n] $ is convex with respect to $ \bm{u}[n] $ and $ \underline{R}_{U_r}[n] $ is concave with respect to $ \bm{u}[n] $, thus (31c) is a convex constraint. Furthermore, (31d), (31f), and (31g) are linear constraints. (31e), (31h), and (31i) are convex constraints. Therefore, problem \eqref{eq:pro_U_new} is a convex problem.
\end{proof}
This convex problem can then be efficiently solved by using the well-established standard convex optimization method such as the interior-point method \cite{Boyd2004Convex}.

\subsection{UAV-CEU Association Optimization}\label{sec:association optimization}
Given a mobility management strategy of the UAV, the subproblem of optimizing the UAV-CEU association is rewritten from problem \eqref{eq:pro_U_P} as follows:
\begin{subequations}\label{eq:pro_P}
	\begin{align}
	\mathop{\max}_{\bm{P}}\quad\!\!
	&\sum_{k\in \mathcal{K}}R_E[k]\\
	\textrm{s.t.}\quad\!\!
	&R_E[k]\geq R_0, \ \forall k\in \mathcal{K}, \\
	&\sum_{k\in \mathcal{K}}\rho_{k,n}\leq 1, \ n=2,\cdots,N,\\
	&\rho_{k,n}=\{0,1\}, \ \forall k\in \mathcal{K}, \ n=2,\cdots,N,\\
	&\sum_{i=2}^{n}R_{U_t}[i]\leq \sum_{i=1}^{n-1}R_{U_r}[i], \ n=2,\cdots,N.
	\end{align}
\end{subequations}

It is difficult to solve problem \eqref{eq:pro_P} because of the integer variable $ \rho_{k,n} $. By relaxing (32d) to the continuous constraint $ \rho_{k,n}\in\left[0,1\right] $, problem \eqref{eq:pro_P} reduces to a standard linear programming because the objective function and the constraints are linear combinations of $ \bm{P} $. The linear programming is
\begin{subequations}\label{eq:pro_P_linear}
	\begin{align}
		\mathop{\max}_{\bm{P}}\quad\!\!
		&\sum_{k\in \mathcal{K}}R_E[k]\\
		\textrm{s.t.}\quad\!\!
		&0\leq \rho_{k,n}\leq 1,\ \forall k\in \mathcal{K},\ n=2,\cdots,N, \\
		&(\text{32b}),(\text{32c}),(\text{32e}).
	\end{align}
\end{subequations}

Naturally, this linear programming is a convex optimization problem. Typically, like in \cite{Cheng2018UAV}, the relaxed problem was solved by classical optimization methods, and then the solution of the relaxed problem was rounded to get the desired integer results. However, in this way, the optimality of the solution can not be guaranteed in theory and the feasibility of the solution may not hold due to the operation of rounding. This motivates us to adopt the Lagrangian dual decomposition method to obtain a low-complexity solution. 

In the following, we show that it fortunately returns integer solutions, which preserves both optimality and feasibility of the original problem if the variable relaxation is also deployed temporarily but using the dual decomposition approach.

After relaxing the binary constraints with respect to $ \bm{P} $, problem \eqref{eq:pro_P} becomes a standard linear program. By introducing dual variables $ \bm{\lambda}=\{\lambda_n\}_{n=2}^{N} $ and $ \bm{\eta}=\{\eta_k\}_{k=1}^{K} $, we can write the Lagrangian function of problem \eqref{eq:pro_P_linear} as
\begin{align}\label{eq:LagFun}
L(\bm{P},\bm{\lambda},\bm{\eta})= &\sum_{k=1}^{K} R_E[k]-\sum_{n=2}^{N} \lambda_n \left(\sum_{i=2}^{n} R_{U_t}[i]-\sum_{i=1}^{n-1} R_{U_r}[i]\right) \nonumber \\
&-\sum_{k=1}^{K}\eta_k(R_0-R_E[k]),
\end{align}
where the dual variables $ \bm{\lambda}=\{\lambda_n\}_{n=2}^{N} $ and $ \bm{\eta}=\{\eta_k\}_{k=1}^{K} $ are all nonnegative.

Equivalently, we solve its dual problem
\begin{subequations}\label{eq:Dual}
	\begin{align}
	\mathop{\min}_{\bm{\lambda}\geq \mathbf{0},\bm{\eta}\geq \mathbf{0}} 
	&\mathop{\max_{\bm{P}}} L(\bm{P},\bm{\lambda},\bm{\eta}) \\
	\textrm{s.t.}\quad\!\!
	&0\leq \rho_{k,n}\leq 1,\ \forall k\in \mathcal{K},\ n=2,\cdots,N, \\
	&\sum_{k\in \mathcal{K}}\rho_{k,n}\leq 1,\ n=2,\cdots,N.	
	\end{align}
\end{subequations}
By defining
\begin{equation}
	m_{k,n}\triangleq \log_2\left(1+\frac{P_U h_{uk}[n]}{\sigma^2}\right),
\end{equation}
the inner maximization in \eqref{eq:Dual} is rewritten as
\begin{subequations}\label{eq:opt_P}
	\begin{align}
	\mathop{\max_{\bm{P}}} \frac{1}{N} 
	&\sum_{k=1}^{K} \!\sum_{n=2}^{N} \!\rho_{k,n} m_{k,n}\!(1+\eta_k)\! -\sum_{k=1}^{K}\sum_{n=2}^{N}\left(\lambda_n\!\sum_{i=2}^{n}\!\rho_{k,i}m_{k,i}\right) \label{eq:inner max}\\
	\textrm{s.t.}\quad\!\!
	&0\leq \rho_{k,n}\leq 1,\ \forall k\in \mathcal{K},\ n=2,\cdots,N, \label{eq:rho1}\\
	&\sum_{k\in \mathcal{K}}\rho_{k,n}\leq 1,\ n=2,\cdots,N.\label{eq:rho2}
	\end{align}
\end{subequations}

\begin{algorithm}[t!]
	Initialize $ \bm{P} $ and let $ l=0 $\;
	\Repeat{convergence}
	{Given $ \bm{P}^{(l)} $, find the optimal $ \bm{U}^{(l+1)} $, $ \bm{V}^{(l+1)} $, and $ \bm{A}^{(l+1)} $ by solving problem \eqref{eq:pro_U_new}\;
		Given $ \bm{U}^{(l+1)} $, $ \bm{V}^{(l+1)} $, and $ \bm{A}^{(l+1)} $, find the optimal $ \bm{P}^{(l+1)} $ by solving problem \eqref{eq:pro_P}\;
		Update $ l=l+1 $\;
	}
	
	Return the UAV mobility management $ \bm{U}^*=U^{(l)} $, $ \bm{V}^*=V^{(l)} $, $ \bm{A}^*=A^{(l)} $, and the corresponding UAV-CEU association strategy $ \bm{P}^*=P^{(l)} $.
	\caption{IMMUA Algorithm for Problem \eqref{eq:pro_U_P}}
\end{algorithm}

We simplify the objective function in \eqref{eq:inner max} further and rewrite it as
\begin{equation}\label{eq:simple inner max}
	\mathop{\max_{\bm{P}}} \sum_{k=1}^{K} \sum_{n=2}^{N}A_{k,n}\rho_{k,n},
\end{equation}
which is a linear combination of $ \rho_{k,n} $ and the coefficient $ A_{k,n}=\left(\frac{m_{k,n}(1+\eta_k)}{N}-\sum_{i=n}^{N}\lambda_i m_{k,n}\right) $. To obtain the maximum value of \eqref{eq:simple inner max}, we should let $ \rho_{k,n} $ with the largest coefficient be 1 and the others be 0 for any $ n $ due to the constraints \eqref{eq:rho1} and \eqref{eq:rho2}. Thus, it implies the optimal solution as
\begin{equation}\label{eq:optimal association}
\rho_{k,n}^*=
\begin{cases}
1,&\text{if}\; k \!=\! k^{(n)}\\
0,&\text{if}\; k \!\neq\! k^{(n)},
\end{cases}
\end{equation}
where $ k^{(n)}=\arg\;\max\limits_{q\in\mathcal{K}}\left(\frac{m_{q,n}(1+\eta_q)}{N}-\sum_{i=n}^{N}\lambda_i m_{q,n}\right) $.

Notice that the optimal $ \rho_{k,n}^* $ is proven in \eqref{eq:optimal association} to be either 0 or 1 which satisfies the integer constraint (32d) in problem \eqref{eq:pro_P}, even though we temporarily relaxed $ \rho_{k,n} $ as a continuous variable. Therefore, the optimal solution to problem \eqref{eq:opt_P} is exactly given by \eqref{eq:optimal association}.

Then, we need to solve the outer minimization in \eqref{eq:Dual} using the integer solution of $ \rho_{k,n}^* $ in \eqref{eq:optimal association}. In each iteration, we exploit the subgradient based method \cite{Boyd2004Convex} to update the dual variables as
\begin{equation}
\lambda_n^{(t+\!1\!)}\!=\!\left[\lambda_n^{(t)}\!-\!\delta^{(t)}\! \!\left(\!\!-\!\!\sum_{i=2}^{n}\sum_{k=1}^{K}\rho_{k,i}^{(t)}m_{k,i}^{(t)}\!+\!\sum_{i=1}^{n-1}R_{U_r}^{(t)}[i] \!\right)\!\right]^+,
\end{equation}
\begin{equation}
\eta_k^{(t+1)}=\left[\eta_k^{(t)}-\delta^{(t)}\left(-R_0+\frac{1}{N}\sum_{n=2}^{N}\rho_{k,n}^{(t)} m_{k,n}^{(t)}\right)\right]^+,
\end{equation}
where $ \delta^{(t)} $ is the step size and 
\begin{equation}
	[x]^+=
	\begin{cases}
	x,&\text{if}\;x\geq0\\
	0,&\text{if}\;x<0.
	\end{cases}
\end{equation}
We update dual variables $ \bm{\lambda}=\{\lambda_n\}_{n=2}^{N} $ and $ \bm{\eta}=\{\eta_k\}_{k=1}^{K} $ and association indicators $ \bm{P}=\{\rho_{k,n},\forall k\in \mathcal{K},n=2,\cdots,N\} $ iteratively until the objective function in \eqref{eq:Dual} converges. In this way, the UAV-CEU association optimization problem in \eqref{eq:pro_P_linear} is solved.

To this end, we are able to solve the original problem by tackling the two subproblems, i.e., UAV mobility management and UAV-CEU association optimization, in an alternating manner. We summarize the iterative mobility management and user association (IMMUA) algorithm in \textbf{Algorithm 1}, which can obtain a suboptimal solution with low complexity. The convergence and complexity of IMMUA algorithm is analyzed in the following subsection. Even though we first decompose the original problem into two subproblems and then solve them separately in two steps, we obtain optimal or near-optimal solution in both steps. This guarantees good performance of our proposed IMMUA algorithm which will be demonstrated by numerical results in Section \ref{sec:results}.

\begin{mydis}
\textup{In addition to the objective function formulated in problem \eqref{eq:pro_U_P}, our proposed IMMUA algorithm can also handle some other forms of objective functions if the traffic patterns are considered. For instance, we can maximize the weighted sum rate of CEUs which is expressed as}
\begin{equation} \label{eq:function_new}
\mathop{\max}_{\bm{U},\bm{V},\bm{A},\bm{P}} \ \sum_{k\in \mathcal{K}}w_k R_E[k],
\end{equation}
\textup{where $ w_k $ denotes the constant weight of CEU $ k $. Note that the values of $ w_k $ can be determined by the traffic patterns, e.g., Poisson distribution, for different users \cite{Kobayashi2006An}. Since the weights are constant, they do not affect the application of our proposed algorithm to solve the problem.}

\textup{Considering more directly the traffic arrival patterns for different users, we can add minimum rate requirement of each user individually. That is to change (19b) in problem \eqref{eq:pro_U_P} into the following constraint:}
\begin{equation}
R_E[k]\geq R_k, \ \forall k \in \mathcal{K},
\end{equation}
\textup{where $ R_k $ is the minimum rate requirement of user $ k $. In particular, $ R_k $ can be determined according to user types, mission types, and traffic arrival patterns. In this way, the new problem imposes different QoS requirements on different users. This new constraint can be handled by using the similar bounding techniques for (19b).} 
\end{mydis}

\subsection{Convergence and Complexity Analysis}\label{sec:complexity}
With our proposed IMMUA algorithm, the resulting objective function value of problem \eqref{eq:pro_U_P} is non-decreasing after each iteration. Furthermore, it has a finite upper bound. Therefore, the overall IMMUA algorithm is guaranteed to converge.

The complexity of the IMMUA algorithm lies in solving the UAV mobility management problem and the UAV-CEU association optimization problem. Considering that we solve the UAV mobility management problem via the standard interior-point method and the number of optimization variables is $ 6N $, the complexity of solving this problem is $ \mathcal{O}(L_i N^3) $ [10, Pages 487, 569], where $ L_i $ denotes the number of iterations required by the interior-point method. As for the UAV-CEU association optimization problem, we solve it via the dual decomposition method, whose complexity is $ \mathcal{O}(L_d (N+K)) $, where $ L_d $ denotes the number of iterations needed by the dual method. Therefore, the total complexity of ITUA algorithm is $ \mathcal{O}(L_o(L_i N^3+L_d (N+K))) $, where $ L_o $ represents the number of outer iterations.

\section{Numerical Results}\label{sec:results}
In this section, numerical results are presented to validate the effectiveness of our proposed algorithm. We consider a UAV-assisted communication network in IoT applications. As depicted in Fig. \ref{Fig:trajectory}, the network consists of one UAV, three adjacent GBSs, and four CEUs. Specifically, these CEUs are any type of IoT devices, which are randomly deployed in the overlapped coverage of GBSs to perform specific tasks. Due to the long distance from GBSs, the channel quality between CEUs and GBSs is poor and the QoS of CEUs cannot be guaranteed. Under this circumstance, a UAV acts as a mobile relay to forward data from GBSs to CEUs when data transmission is needed, which is a convenient and cost-efficient way to improve the QoS of CEUs without increasing infrastructure construction. We need to note that the above simulation setup also applies to a post-disaster communication scenario where a sudden disaster damaged the cellular infrastructure in the area affected by the disaster \cite{Zhao2019UAV}. By leveraging a UAV, rescue information can be sent from the GBSs in the area unaffected by the disaster to the IoT devices in the disaster area. Each GBS is equipped with $ L=8 $ antennas. The radius of each cell is set to $ r=1000 $~m and the horizontal locations of three GBSs are $ (0,r) $, $ (\sqrt{3}r,r) $, and $ (\frac{\sqrt{3}}{2}r,-\frac{1}{2}r) $ respectively. Other important simulation parameters are listed in Table \ref{TABLE:parameter} unless otherwise specified.

\begin{table}[t!]
	\small
	\setlength{\belowcaptionskip}{-12pt}
	\centering
	\caption{Simulation Parameters}\label{TABLE:parameter}
	\begin{tabular}{|c|c|c|}
		\hline
		\textbf{Parameter} & \textbf{Description} & \textbf{Value} \\
		\hline
		$ a_{\rm{max}} $ & Maximum UAV acceleration & $ 5~\rm{m/s^2} $ \\
		\hline
		$ \alpha_0 $ & Reference channel power at $ d_0=1 $~m & -60 dB \\ 
		\hline
		$ H $ & Flight altitude of UAV & 100 m  \\
		\hline
		$ L $ & Number of antennas at GBSs & 8 \\
		\hline
		$ P_B $ & Transmit power of GBSs & 10 W \\
		\hline
		$ P_U $ & Transmit power of UAV  & 1 W \\
		\hline
		$ r $ & Radius of the cell & 1000 m\\ 
		\hline
		$ r_c $ & Radius of the circular trajectory & 500 m \\
		\hline
		$ R_0 $ & Minimum rate requirement of CEU & 0.5 bps/Hz \\
		\hline
		$ V_{\rm{max}} $ & Maximum UAV speed & 50 m/s \\
		\hline
		$ \sigma^2 $ & Noise power & -114 dBm \\
		\hline	
	\end{tabular}
\end{table}

\begin{table}[t!]
	\scriptsize
	\setlength{\belowcaptionskip}{-10pt}
	\centering
	\caption{UAV-CEU association strategy obtained by IMMUA algorithm with the maximum speed of $ V_{\rm{max}} = 40 $~m/s}\label{TABLE:associationV40}
	\begin{tabular}{|c|c|c|c|c|c|c|c|c|c|c|c|c|c|c|c|}
		\hline
		\textbf{Time slot} & \textbf{1} & \textbf{2} & \textbf{3} & \textbf{4} & \textbf{5} & \textbf{6} & \textbf{7} & \textbf{8} & \textbf{9} & \textbf{10} \\
		\hline
		\textbf{Associated CEU} & Null & 2 & 4 & 4 & 4 & 4 & 4 & 4 & 4 & 4 \\ 
		\hline
		\textbf{Time slot} & \textbf{11} & \textbf{12} & \textbf{13} & \textbf{14} & \textbf{15} & \textbf{16} & \textbf{17} & \textbf{18} & \textbf{19} & \textbf{20}   \\
		\hline
		\textbf{Associated CEU} & 4 & 4 & 4 & 4 & 1 & 1 & 1 & 1 & 1 & 1 \\
		\hline
		\textbf{Time slot} & \textbf{21} & \textbf{22} & \textbf{23} & \textbf{24} & \textbf{25} & \textbf{26} & \textbf{27} & \textbf{28} & \textbf{29} & \textbf{30} \\
		\hline
		\textbf{Associated CEU} & 1 & 1 & 4 & 4 & 2 & 2 & 2 & 2 & 2 & 2 \\
		\hline
		\textbf{Time slot} & \textbf{31} & \textbf{32} & \textbf{33} & \textbf{34} & \textbf{35} & \textbf{36} & \textbf{37} & \textbf{38} & \textbf{39} & \textbf{40} \\
		\hline
		\textbf{Associated CEU} & 2 & 2 & 4 & 4 & 4 & 4 & 3 & 3 & 3 & 3  \\
		\hline
		\textbf{Time slot} & \textbf{41} & \textbf{42} & \textbf{43} & \textbf{44} & \textbf{45} & \textbf{46} & \textbf{47} & \textbf{48} & \textbf{49} & \textbf{50} \\ 
		\hline
		\textbf{Associated CEU} & 4 & 4 & 4 & 4 & 4 & 4 & 4 & 4 & 4 & 3 \\
		\hline	
		\textbf{Time slot} & \textbf{51} & \textbf{52} & \textbf{53} & \textbf{54} & \textbf{55} & \textbf{56} & \textbf{57} & \textbf{58} & \textbf{59} & \textbf{60} \\
		\hline
		\textbf{Associated CEU} & 4 & 3 & 3 & 3 & 4 & 4 & 4 & 4 & 4 & 2\\
		\hline
	\end{tabular}
\end{table}

\begin{table}[t!]
	\scriptsize
	\setlength{\belowcaptionskip}{-10pt}
	\centering
	\caption{UAV-CEU association strategy obtained by IMMUA algorithm with the maximum speed of $ V_{\rm{max}} = 50 $~m/s}\label{TABLE:associationV50}
	\begin{tabular}{|c|c|c|c|c|c|c|c|c|c|c|c|c|c|c|c|}
		\hline
		\textbf{Time slot} & \textbf{1} & \textbf{2} & \textbf{3} & \textbf{4} & \textbf{5} & \textbf{6} & \textbf{7} & \textbf{8} & \textbf{9} & \textbf{10} \\
		\hline
		\textbf{Associated CEU} & Null & 2 & 4 & 4 & 4 & 4 & 4 & 4 & 4 & 4 \\ 
		\hline
		\textbf{Time slot} & \textbf{11} & \textbf{12} & \textbf{13} & \textbf{14} & \textbf{15} & \textbf{16} & \textbf{17} & \textbf{18} & \textbf{19} & \textbf{20}   \\
		\hline
		\textbf{Associated CEU} & 4 & 4 & 4 & 4 & 1 & 1 & 1 & 1 & 1 & 1 \\
		\hline
		\textbf{Time slot} & \textbf{21} & \textbf{22} & \textbf{23} & \textbf{24} & \textbf{25} & \textbf{26} & \textbf{27} & \textbf{28} & \textbf{29} & \textbf{30} \\
		\hline
		\textbf{Associated CEU} & 1 & 1 & 4 & 4 & 4 & 4 & 2 & 2 & 2 & 2 \\
		\hline
		\textbf{Time slot} & \textbf{31} & \textbf{32} & \textbf{33} & \textbf{34} & \textbf{35} & \textbf{36} & \textbf{37} & \textbf{38} & \textbf{39} & \textbf{40} \\
		\hline
		\textbf{Associated CEU} & 2 & 2 & 2 & 4 & 4 & 4 & 3 & 3 & 3 & 3  \\
		\hline
		\textbf{Time slot} & \textbf{41} & \textbf{42} & \textbf{43} & \textbf{44} & \textbf{45} & \textbf{46} & \textbf{47} & \textbf{48} & \textbf{49} & \textbf{50} \\ 
		\hline
		\textbf{Associated CEU} & 4 & 4 & 4 & 4 & 4 & 4 & 4 & 4 & 4 & 4 \\
		\hline	
		\textbf{Time slot} & \textbf{51} & \textbf{52} & \textbf{53} & \textbf{54} & \textbf{55} & \textbf{56} & \textbf{57} & \textbf{58} & \textbf{59} & \textbf{60} \\
		\hline
		\textbf{Associated CEU} & 4 & 4 & 4 & 4 & 3 & 4 & 4 & 4 & 4 & 2\\
		\hline
	\end{tabular}
\end{table}

\begin{figure}[htbp]
	\centering
	\subfigure[the optimized trajectory with $ V_{\rm{max}}=40 $~m/s]{
	\includegraphics[width = 3.6in]{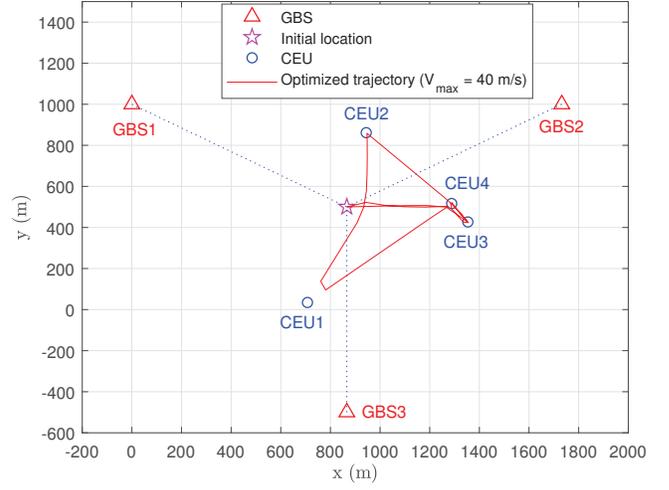} \label{Fig:V40}
	}
    \subfigure[the optimized trajectory with $ V_{\rm{max}}=50 $~m/s]{
    \includegraphics[width = 3.6in]{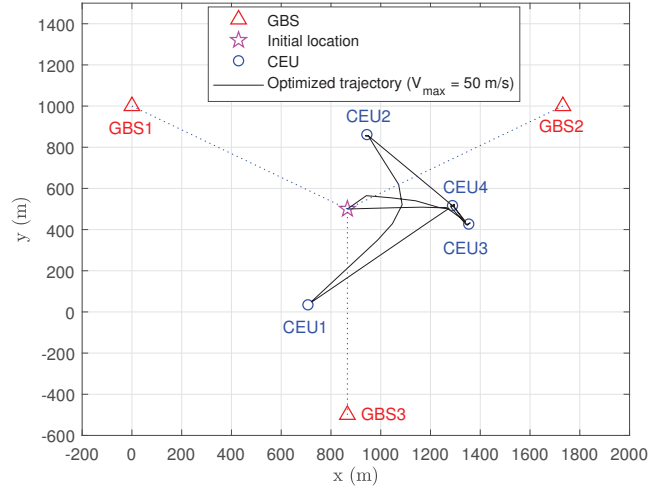} \label{Fig:V50}
    }
\caption{Optimized UAV trajectories obtained by IMMUA algorithm with the maximum speed of $ V_{\rm{max}}=40 $~m/s and $ V_{\rm{max}}=50 $~m/s, respectively.}
\label{Fig:trajectory}
\end{figure}

\begin{figure}[t!]
	\setlength{\abovecaptionskip}{-5pt}
	\centering
	\includegraphics[width = 3.6in]{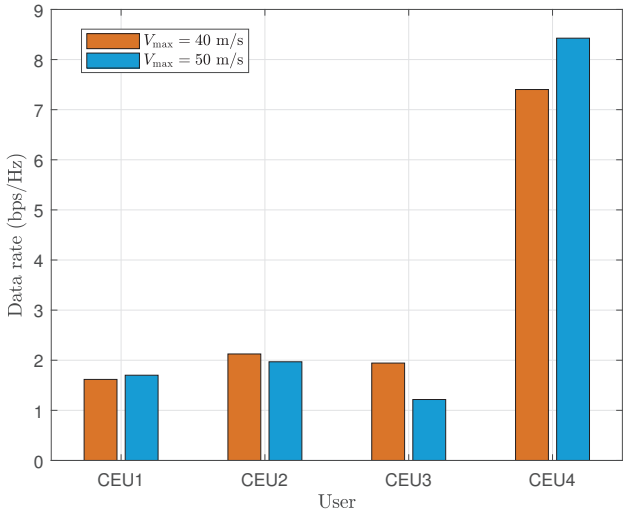}
	\caption{The rate of each CEU with the optimized mobility management and association strategy.}
	\label{Fig:bar}
\end{figure}


Firstly, we show the UAV trajectory and UAV-CEU association strategy obtained by IMMUA algorithm intuitively. Fig. \ref{Fig:V40} and Fig. \ref{Fig:V50} illustrate the optimized UAV trajectories with the maximum UAV speeds of 40 m/s and 50 m/s, respectively. The pre-specified initial location of the UAV is set to the intersection point of the three cells, i.e., $ (\frac{\sqrt{3}}{2}r,\frac{1}{2}r) $. The UAV has to return to the initial location after each transmission period. As shown in Fig. \ref{Fig:trajectory}, the UAV flies in the following way: firstly, the UAV flies from the initial location to CEU4. Secondly, it flies to CEU1, and then flies to CEU2. After that, it flies through CEU4 to CEU3 and stays there for a while. Finally, it returns to the initial location.

For the cases of $ V_{\rm{max}}=40 $~m/s and $ V_{\rm{max}}=50 $~m/s, the corresponding UAV-CEU association strategies are presented in Table \ref{TABLE:associationV40} and Table \ref{TABLE:associationV50}, respectively. In the first time slot, no CEU is associated with the UAV due to the information causality constraint. According to the association strategy, we find that the UAV in fact associates with the nearest CEU in each time slot during its flight to maximize the sum rate of all CEUs. Based on the UAV mobility management strategy and UAV-CEU association strategy obtained by IMMUA algorithm, the data rates of the four CEUs are calculated and presented in Fig. \ref{Fig:bar}, respectively. CEU4 has the highest data rate since it has the most time slots associated to the UAV.
 

Fig. \ref{Fig:speed} illustrates the effect of acceleration constraint on UAV's speed verus the flying time slot. We can see that when the acceleration constraint is not taken into account, the speed changes very quickly, even close to infinity, which is not practical. On the contrary, when the acceleration constraint is considered and the maximum acceleration is set to $ a_{\rm{max}}=5~\rm{m/s^2} $, the change of speed is much gentler. The acceleration of the UAV over time slot is plotted in Fig. \ref{Fig:acceleration}, which shows that the acceleration cannot exceed $ a_{\rm{max}}=5~\rm{m/s^2} $.

\begin{figure}[t!]
	\setlength{\abovecaptionskip}{-5pt}
	\centering
	\includegraphics[width = 3.6in]{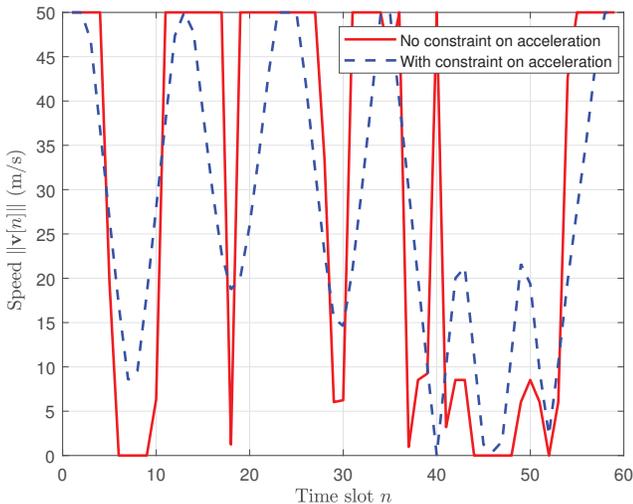}
	\caption{The speed of the UAV over time with $ V_{\rm{max}}=50 $~m/s.}
	\label{Fig:speed}
\end{figure}

\begin{figure}[t!]
	\setlength{\abovecaptionskip}{-5pt}
	\centering
	\includegraphics[width = 3.6in]{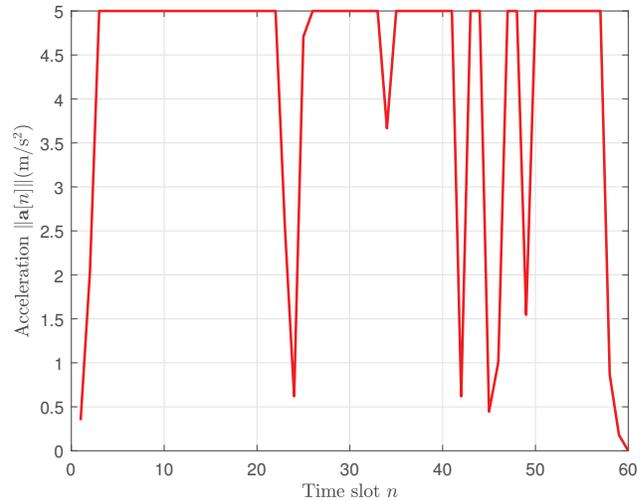}
	\caption{The acceleration of the UAV over time with $ a_{\rm{max}}=5~\rm{m/s^2} $.}
	\label{Fig:acceleration}
\end{figure}

\begin{figure}[t!]
	\setlength{\abovecaptionskip}{-5pt}
	\centering
	\includegraphics[width = 3.6in]{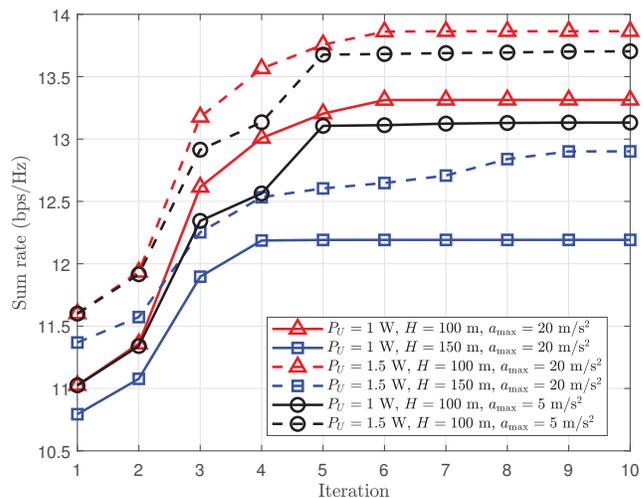}
	\caption{Sum rate of all CEUs versus the number of iterations with different $ P_U $, $ H $, and $ a_{\rm{max}} $.}
	\label{Fig:Throughput}
\end{figure}

Then, we verify the convergence behaviour of our proposed IMMUA algorithm. In Fig. \ref{Fig:Throughput}, we show the sum rate of all the CEUs versus the number of iterations with different values of UAV's transmit power $ P_U $, flight altitude $ H $, and maximum acceleration $ a_{\rm{max}} $. It indicates that our proposed algorithm converges in a few iterations as expected. From the figure results, we can conclude that, for most cases, 7 to 8 iterations can be sufficient for the algorithm to converge.
While in practice use, 5 iterations in most cases achieve a 99\% of the rate upon convergence. From the figure, the sum rate improves as the UAV's transmit power increases. This is intuitively true since an increase of transmit power could lead to a higher SINR, which implies higher data rate. On the other hand, the sum rate decreases as the UAV flies higher. This is because higher flight induces weak channel power gains resulting in lower rates. In theory, the lower UAV flies, the higher rate of CEUs. However, this is based on the assumption that the flight altitude of UAV is at least 100 m. When the flight altitude of UAV is much lower, the air-to-ground communication links will be blocked and scattered by buildings and other obstacles. The channel model used in this paper will no longer be applicable and the data rate will decrease. Therefore, the flight altitude of UAV cannot be too high or too low. Moreover, as can be seen from Fig. \ref{Fig:Throughput}, a smaller maximum acceleration leads to a lower sum rate, which is intuitively true since the feasible set is reduced.

\begin{figure}[t!]
	\setlength{\abovecaptionskip}{-5pt}
	\centering
	\includegraphics[width = 3.7in]{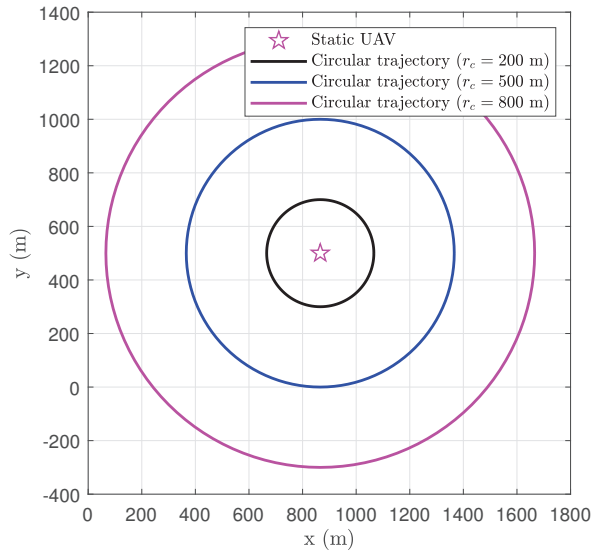}
	\caption{Benchmark trajectories: (a) static UAV, (b) circular trajectories with different radius.}
	\label{Fig:Circular}
\end{figure}

\begin{figure}[t!]
	\setlength{\abovecaptionskip}{-5pt}
	\centering
	\includegraphics[width = 3.5in]{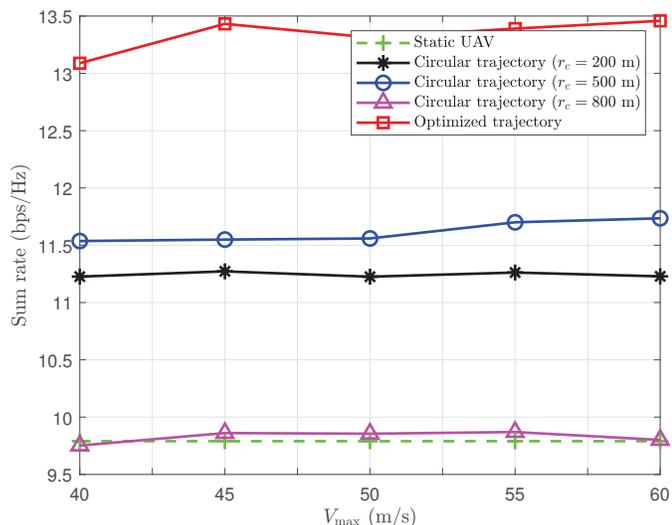}
	\caption{Sum rate of CEUs with different trajectories: static UAV, circle trajectories and optimized trajectory.}
	\label{Fig:rateVSspeed}
\end{figure}

\begin{figure}[t!]	
	\setlength{\abovecaptionskip}{-5pt}
	\centering
	\includegraphics[width = 3.7in]{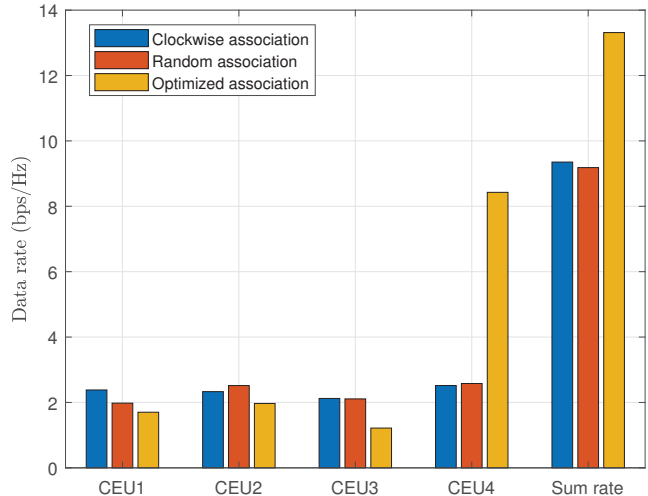}
	\caption{The data rate of each CEU with different UAV-CEU association strategies.}
	\label{Fig:rateVSassociation}
\end{figure}

\begin{figure}[t!]	
	\setlength{\abovecaptionskip}{-5pt}
	\centering
	\includegraphics[width = 3.6in]{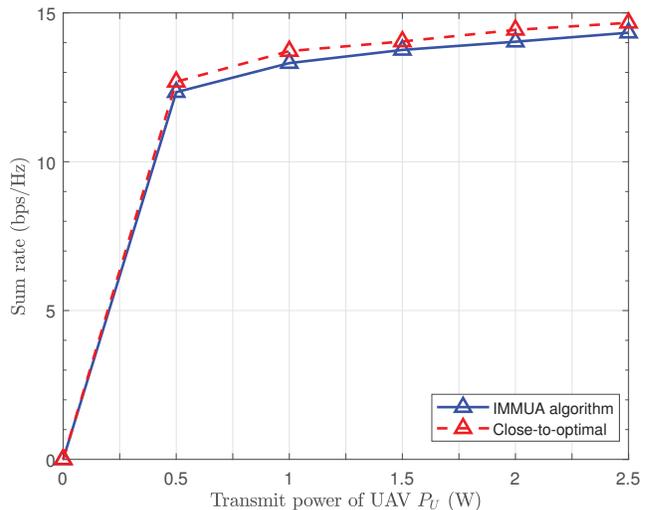}
	\caption{Performance comparison of IMMUA algorithm and close-to-optimal algorithm.}
	\label{Fig:optimal}
\end{figure}

For the purpose of comparison, we consider two types of benchmark UAV trajectories as shown in Fig. \ref{Fig:Circular}: (a) static UAV, where the UAV stays at the intersection point in the whole transmission period; (b) circle trajectories, where the UAV flies at the maximum speed with different circle radii, 200 m, 500 m, and 800 m, respectively. All the circle trajectories are centered at the intersection point. For different UAV trajectories, the UAV-CEU association strategy is optimized using our proposed algorithm as stated in Section \ref{sec:association optimization}. Fig. \ref{Fig:rateVSspeed} compares the sum rate of our proposed algorithm with those of the benchmark trajectories. It is observed that our proposed algorithm outperforms the benchmark schemes at different values of $ V_{\rm{max}} $, which demonstrates the efficiency of our proposed algorithm. The static UAV corresponds to a traditional relay, which has no degree of freedom. Therefore, the sum rate of CEUs does not vary with $ V_{\rm{max}} $. As for circle trajectories with different radii, the performance is quite different in terms of sum rate of CEUs. Specifically, the circle trajectory with $ r_c=500 $~m has the highest sum rate among the three circle trajectories due to the fact that this trajectory is close to the locations of CEUs. On the contrary, the circle trajectory with $ r_c=800 $~m performs worst since it is too far from CEUs. The performance of the circle trajectory with $ r_c=200 $~m is between the two. Besides, we observe that our proposed algorithm achieve 18.6\%, 13.4\% and 35.8\% sum rate gain over circle trajectories with $ r_c=200 $~m, 500 m and 800 m, respectively.

Then we consider the scenario where the UAV-CEU association strategy is fixed. The UAV trajectory is optimized using our proposed algorithm as stated in Section \ref{sec:trajectory optimization}. We consider random association strategy and clockwise association strategy as benchmarks. As for the random association strategy, the UAV associates with one CEU randomly in each time slot. According to the locations of CEUs, clockwise association strategy means the UAV associates with CEU1, CEU2, CEU4 and CEU3 in a clockwise manner. Fig. \ref{Fig:rateVSassociation} shows the data rate of each CEU with different association strategies. It shows that the optimized association strategy improves sum rate performance over benchmark association strategies at the cost of fairness.

As stated in Section \ref{sec:algorithm}, the original problem is a mixed integer program and the causality constraint is particularly hard to solve. It is challenging to obtain an optimal solution of such problem due to extremely high complexity. Even for the subproblem of mobility management, it is still nonconvex and tough to solve. Therefore, we compare the performance of our proposed algorithm with a close-to-optimal method that randomly selects 100 initial points for the proposed algorithm and returns the solution corresponding to the maximum objective value. Note that this method has been popular in approaching the optimal solution in literature \cite{Yang2018Joint,Chen2017Caching}. From Fig. \ref{Fig:optimal}, it can be seen that the performance of our proposed IMMUA algorithm quite approaches the close-to-optimal performance, which indicates the effectiveness of our proposed algorithm. Under different parameter settings, the proposed algorithm only causes less than 3\% performance loss.

\section{Conclusion}\label{sec:conclusion}
In this paper, we have studied a new mobile relaying technique with a cache-enabled UAV in a multi-cell network. We jointly optimized the UAV-CEU association strategy and UAV mobility management strategy to maximize the sum rate of all CEUs, subject to the minimum rate requirements of CEUs, mobility constraints and casual buffer constraints. We formulate an optimization problem and the original mixed-integer nonconvex problem is successfully transformed into two convex subproblems. Accordingly, an efficient iterative algorithm was developed to solve the two subproblems in an alternating manner, which is guaranteed to converge with low complexity. According to simulation results, the mobility of UAV induces rate improvement compared with static relay. Furthermore, our proposed algorithm performs well and outperforms the traditional trajectories and association strategies significantly. In our future work, we will extend the results obtained in this paper by taking into account the optimization of UAV's altitude, mobile CEUs, more UAVs as well as the energy efficiency of UAV.

\end{document}